\newcommand{\owa}{\mbox{OWA}}
\newcommand{\RP}{\mathbb{R}^p}
\newcommand{\R}{\mathbb{R}}
\newcommand{\vij}{v_{i}^{j}}
\newcommand{\ej}{e^{j}}
\newcommand{\Sol}{\mathcal{T}}
\newcommand{\blue}{\mbox{\tt blue}}
\newcommand{\red}{\mbox{\tt red}}
\newtheorem{prop}{\bf Proposition}
\newenvironment{pf}[1][]{{\noindent \bf Proof. #1 }}{\hfill $\Box$\\ }
\newtheorem{defi}{Definition}
\newtheorem{example}{Example}
\def\keyw#1{{\bf#1}}
\def\tab{\hspace*{5mm}}
\newsavebox{\fminibox}
\newlength{\fminilength}
\newenvironment{fminipage}[1][\linewidth]%
{\vspace{\parsep}
    \setlength{\fminilength}{#1}%
    \setlength{\fboxrule}{0.5pt}
    \setlength{\fboxsep}{1.5mm}
    \addtolength{\fminilength}{-2\fboxsep}%
    \addtolength{\fminilength}{-2\fboxrule}%
    \begin{lrbox}{\fminibox}\begin{minipage}{\fminilength}}
  {\end{minipage}\end{lrbox}\noindent\fbox{\usebox{\fminibox}}\vspace{\parsep}}
\def\algo#1#2#3{\begin{fminipage}{{\bf Algorithm~}{\sc#1}\\{\bf Input~:~}#2\\{\bf Output~:~}#3\\}}
\def\endalgo{\end{fminipage}}
\journal{Computers \& Operations Research}
\begin{document}

\begin{frontmatter}




\title{Exact algorithms for $\owa$-optimization in multiobjective\\ spanning tree problems}
\author[LIP6]{Lucie Galand}
\ead{lucie.galand@lip6.fr}
\author[LIP6]{Olivier Spanjaard\corref{cor1}}
\ead{olivier.spanjaard@lip6.fr}
\address[LIP6]{LIP6-CNRS, Universit\'e Pierre et Marie Curie (UPMC), 104 Avenue du Pr\'esident Kennedy, F-75016 Paris, France}
\cortext[cor1]{Corresponding author.}

\begin{abstract}
This paper deals with the multiobjective version of the optimal
spanning tree problem. More precisely, we are interested in determining the optimal
spanning tree according to an \emph{Ordered Weighted Average} (OWA) of its objective values. We first show that the problem is
weakly NP-hard. In the case where the weights of the OWA are strictly decreasing, we then propose a mixed integer programming formulation, and provide dedicated optimality conditions yielding an important reduction of the size of the program. Next, we present two bounds that can be used to prune
subspaces of solutions either in a shaving phase or in a branch and bound procedure. The validity of these bounds does not depend on specific properties of the weights (apart from non-negativity). All these exact resolution algorithms are compared on the basis of numerical experiments, according to their respective validity scopes. 
\end{abstract}

\begin{keyword}
Multiobjective spanning tree problem \sep ordered weighted
  average \sep MIP formulation \sep optimality conditions \sep branch and bound 

\end{keyword}

\end{frontmatter}

\section{Introduction}
\label{intro}
Multiobjective combinatorial optimization deals with problems
involving multiple viewpoints \cite{Ehrgo05}. More formally, the
valuation structure of such problems is made of vectors (each
component representing a specific viewpoint) instead of scalars. All
popular single objective optimization problems (e.g., valued graph
problems, integer linear programming...) can be recasted in this
setting, and solution algorithms must be proposed. Two types of
approaches can be studied: either one looks for a best compromise
solution according to a given aggregation function (e.g., max
operator, Chebyshev's norm to a reference vector \cite{Wierz80},
ordered weighted average \cite{Yager88}, Choquet integral
\cite{Grabi96}), or one aims at generating the whole set of
Pareto-optimal solutions (i.e., solutions that cannot be improved on
one objective without being depreciated on another one). Computing this set (also called
the Pareto set) is natural when no preferential information is available 
or when the available information is unsufficient to elicit the parameters of the aggregation function: the solutions in the Pareto
set are indeed the only ones likely to be selected by any rational
decision maker. The interest in this approach has spawned a
substantial literature (for a survey on the topic, the reader can
refer to several quite recent papers \cite{EhrgG00,EhrgG04}). However,
the number of Pareto-optimal solutions can grow exponentially
with the size of the instance \cite[e.g.][]{HamaR94,Hanse80} and the number
of objectives \cite{Rosin91}. The
examination of the whole set of Pareto-optimal solutions can therefore become
quickly awkward. Furthermore, focusing on one particular compromise solution
makes it possible to considerably speed up the resolution
procedure. Consequently, for practicality and
efficiency reasons, the search for a best compromise solution seems to be a
better approach when an aggregation function is available. This is the approach we study in this paper. 

More precisely, we investigate here the multiobjective version of the
optimal spanning tree problem. This problem arises naturally in
various contexts. For example, consider a broadcasting network where
the values of the edges represent bandwidths. Assuming that the
bandwidth of a chain equals the minimum bandwidth over its edges, it
is well-known that, in a maximal spanning tree, the bandwidth between
two nodes is the maximum possible. When there are several scenarios of
traffic (impacting the values of the bandwidths) or several opinions
of experts on the values of the bandwidths, the problem becomes
multiobjective. Previous works on the subject mainly deal with generating the whole Pareto set in the biobjective case \cite{AndJL96,HamaR94,SourS08,SteiR08}, or computing a min-max (regret) optimal solution when there are more than two objectives \cite{AisBV09,HamaR94,KouvY97,Warbu85,Yu98}. To our knowledge, there is therefore no operational algorithmic tool for this problem when there are more than two objectives and when the min-max (regret) criterion is not really suitable. The present paper precisely aims at tackling this gap, by providing algorithms able to optimize a less conservative decision criterion for any number of objectives. Provided the required preferential information is available, and provided the objectives are commensurate (which is the case in the above example for instance), we propose to resort to an averaging operator to compare
the vectorial values of the feasible solutions (spanning trees). According to the decision context, one may however want to put the emphasis on the best, the worst or the median evaluations of a solution. In other words, one needs to assign
importance weights not to specific objectives (scenarios, experts), but rather to
best and worst evaluations. The \emph{Ordered Weighted Average} (OWA)
precisely enables to model such concern. For this reason, we focus
here on the \emph{OWA-optimal spanning tree problem}, i.e. finding the
optimal spanning tree according to OWA in a multiobjective spanning
tree problem. 
In the case of strictly decreasing weights (favouring well-balanced
solutions), the use of an $\owa$ objective function has been studied
by Ogryczak and Sliwinski \cite{Ogryc09,OgryS03} in continuous
optimization under linear constraints, and by Galand and Spanjaard \cite{GalaS07b} in multiobjective heuristic search. In this paper, we propose new algorithms specifically dedicated to the optimization of OWA in multiobjective spanning tree problems, some of which are able to handle not only strictly decreasing weights but also every other types of weights.

The paper is organized as follows. After recalling some preliminary
definitions and stating the problem, we give some insights into computational complexity (Section 2). Then, we provide a mixed integer programming formulation of the problem, as well as a preprocessing procedure based on optimality conditions for the $\owa$-optimal spanning tree problem (Section 3). We then propose two -- efficiently computable -- alternative bounds for discarding subspaces
of solutions in either a branch and bound or a shaving procedure (Section 4). Finally, we provide numerical experiments to assess the operationality of the proposed methods (Section 5).

\section{Preliminaries}
\label{sec:OWA}

\subsection{Multiobjective compromise search problem}
\label{sec:comp}

A \emph{multiobjective compromise search problem} is a
problem endowed with vectorial costs where one searches for a best
compromise solution according to a given aggregation function. A generic multiobjective compromise search problem can be formulated as a mathematical program.
We now introduce some notations for this purpose. 
We denote by $X \subseteq \{0,1\}^m$ the set of feasible
solutions, $f:X \rightarrow \RP$ a vector valued function on $X$, and $\varphi:\RP \rightarrow \R$ a multiobjective aggregation function. Within this setting, a multiobjective compromise search problem is written as follows: 

$$(P) ~~~\left\{\begin{array}{ll}
 \min ~~ \varphi(y)\\
 \mbox{s.t. } y  =  f(x)\\
 x  \in  X
\end{array}\right. $$    

Denoting by $Y= f(X) = \{f(x): x \in X\}$ the \emph{image set} of $X$ in the \emph{objective space} $\RP$, problem $P$ can be simply reformulated as $\min_{y \in Y} \varphi(y)$. The $\owa$-optimal spanning tree problem obviously belongs to this class of
problems. The resolution methods we propose hereafter for this problem are actually quite generic. 
The mathematical formulations of $P$ will therefore be convenient to describe these methods in the sequel.
We now more specifically introduce the $\owa$ operator and then the $\owa$-optimal spanning tree problem.

\subsection{The $\owa$ operator}
\label{sec:def}

Given a set $\{1,\ldots,p\}$ of objectives (to minimize), one can associate a vector in ${\mathbb N}^p$ to every feasible solution of a multiobjective problem. The comparison of solutions amounts then to comparing the corresponding vectors. Following several works in multiobjective optimization \cite[e.g.][]{Ogryc00,PernS03,PerSS06}, we propose to compare the vectors on the basis of their OWA value \cite{Yager88} (to minimize), defined as follows:

\begin{defi}
Given a vector $y$ in ${\mathbb R}^p$, its ordered weighted average is $\owa(y)$=$\sum_{i=1}^p w_i y_{(i)}$, where $\sum_{i=1}^p w_i = 1$ and $y_{(1)}$ $\geq$ .. $\geq$ $y_{(p)}$ are the components of $y$ sorted in non-increasing order.
\end{defi}

According to the decisional context, one can distinguish within the class of OWA operators two interesting subclasses depending on the definition of the weights:
\begin{itemize}
\item \emph{strictly decreasing weights}, i.e. $w_1 > \ldots > w_p >
  0$: the set of weights naturally belongs to this class when
  performing robust discrete optimization. In \emph{robust
    optimization} \cite{KouvY97}, the cost of a solution depend on
  different possible scenarios (states of the world). The aim is to
  find a \emph{robust solution} according to this multiobjective
  representation, i.e. a solution that remains suitable whatever
  scenario finally occurs. The use of the $\owa$ criterion in this setting is justified since it can be characterized by a set of axioms that are natural for modelling robustness \cite{PernS03}. More precisely, these axioms characterize an OWA criterion with strictly positive and strictly decreasing weights. Compared to
  the $\max$ criterion frequently used in robustness, the OWA
  criterion is less conservative since it enables trade-offs between
  several scenarios. Note however that the $\owa$ criterion includes
  the $\max$ criterion as a special case, when one sets
  $w_1=1-\sum_{i=2}^p \varepsilon_i$, $w_2=\varepsilon_2$, $\ldots$,
  $w_p=\varepsilon_p$ with $\varepsilon_2 > \ldots > \varepsilon_p >0$
  and $\varepsilon_2$ tends towards $0$. Another interesting special
  case is obtained for ``big-stepped weights'', i.e. when the gaps
  between successive weights are huge ($w_1 \gg \ldots \gg w_p$). The
  $\owa$ criterion reduces then to the leximax operator, which
  consists in comparing two vectors on the basis of their greatest
  component, their second greatest one in case of equality on the
  first one, and so on... This criterion refines thus the $\max$
  criterion by discriminating between vectors with the same value on
  the greatest component.
\item \emph{non-monotonic weights}: one of the most famous decision criterion in decision under complete uncertainty (i.e., when several states of the world can occur, and no information is available about their plausibilities) is Hurwicz's criterion, that enables to model intermediate attitudes towards uncertainty (i.e., neither desperately pessimistic nor outrageously optimistic) by performing a linear combination of the maximum possible value of a solution under the different scenarios, and the minimum possible one. More formally, if $y$ is the image of a solution in the objective space, then its value according to Hurwicz's criterion is: $\alpha \max_i y_i + (1-\alpha) \min_i y_i$. Hurwicz's criterion clearly is a special case of the OWA criterion, obtained by setting $w_1=\alpha$, $w_p=1-\alpha$ and $w_i=0$ $\forall i\neq 1,p$. As soon as there are more than two scenarios and $\alpha \not\in \{0,1\}$, these sets of weights are neither non-increasing nor non-decreasing. Another natural decision context where the sequence of weights is non-monotonic happens when every component represents the opinion of a particular expert, and one wants to dismiss the extreme opinions. For instance, one can set $w_1=0$, $w_p=0$ and $w_i=1/(p-2)$ $\forall i\neq 1,p$.
\end{itemize}

\subsection{Problem and complexity}

The problem we study in this paper is the OWA-optimal spanning tree problem, that can be formulated as follows:\\ [1ex]
\textsc{$\owa$-optimal spanning tree problem ($\owa$-ST)}\\
\emph{Input: } a finite connected graph $G = (V,E)$, $p$ integer valuations $v_i^e$ for every edge $e \in E$ $(i=1,\ldots,p)$, and a set of weights $w_i$ ($i=1,\ldots,p$) for the $\owa$ criterion;\\
\emph{Goal: } we want to determine a spanning tree $T^* \in ~\mbox{arg}\;\min_{T \in {\mathcal T}}\owa(f(T))$, where ${\mathcal T}$ is the set of spanning trees in $G$ and $f(T) = (\sum_{e \in T} v_1^e,\ldots,\sum_{e \in T} v_p^e)$. \\[-1.5ex] %

Coming back to generic formulation $P$ of a multiobjective compromise search problem, problem $\owa$-ST corresponds to the following specifications: a spanning tree $T$ is characterized by a vector $x=(x^1,\ldots,x^m) \in X$ where $x^j = 1$ when edge $\ej$ belongs to $T$, and its cost is $f(x) = (f_1(x),\ldots,f_p(x))$ where $f_i(x) = \sum_{j=1}^m \vij x^j$. The aggregation function is
of course defined by $\varphi(y) = \owa(y)$. 
We now introduce a small instance of problem $\owa$-ST that will be used as a running example in the sequel of the paper.
\begin{example} \label{EXP1}
Consider a clique $G$ with 4 vertices, and assume that the edges have been evaluated according to 3 objectives (scenarios, experts...). The set of vertices is $V = \{1,2,3,4\}$ and the valuations of the edges are the following: $v^{[1,2]} = (3,2,3)$, $v^{[1,3]} = (4,3,1)$, $v^{[1,4]} = (1,2,2)$, $v^{[2,3]} = (2,4,1)$, $v^{[2,4]} = (2,6,1)$, $v^{[3,4]} = (1,5,1)$. A minimum spanning tree is $T_1 = \{[1,4],[2,3],[3,4]\}$ according to the first dimension and the arithmetic mean, $T_2 = \{[1,2],[1,3],[1,4]\}$ according to the second dimension, and $T_3 = \{[1,3],[2,3],[3,4]\}$ according to the third dimension. None of them is however completely satisfying: either it is too much unbalanced ($f(T_1) = (4,11,4)$ and $f(T_3) = (7,12,3)$), or it is too much conservative ($f(T_2) = (8,7,6)$). Spanning tree $T_4 = \{[1,2],[1,4],[2,3]\}$ (with $f(T_4) = (6,8,6)$) presents none of the drawbacks of the previous trees, and is OWA-optimal when setting for instance $w_1 = 0.5$, $w_2 = 0.3$ and $w_3 = 0.2$. The overall ranking according to OWA on $T_1, \ldots, T_4$ is indeed: 1) $T_4$  with $\owa(6,8,6) = 7$, 2) $T_2$ with $\owa(8,7,6) = 7.3$, 3) $T_1$ with $\owa(4,11,4) = 7.5$, 4) $T_3$ with $\owa(7,12,3) = 8.7$. 
\end{example}
As already indicated above, when $w_1 = 1$ and all other weights $w_i$'s are equal to zero, the $\owa$ criterion reduces to the $\max$ criterion. In this case, it has been proved that computing a $\min$-$\max$ spanning tree is NP-hard \cite{HamaR94,Yu98}. Consequently, problem $\owa$-ST is also NP-hard in the general case. Note however that it is polynomially solvable for some subclasses of instances:
\begin{itemize}
\item when $w_1 = w_2 = \ldots = w_p$ (arithmetic mean), an optimal solution can be obtained by valuing every edge $e$ by $\sum_{i=1}^p v_i^e$, and then applying a standard minimum spanning tree algorithm (e.g. Prim's algorithm or Kruskal's algorithm);
\item when $w_p=1$ and $w_i=0$ $\forall i \neq p$ ($\min$ criterion), an optimal solution can be obtained by solving $p$ standard minimum spanning tree problems (one for each objective) and then returning the optimal one among them according to $\owa$; 
\item when there exists a permutation $\pi$ of objectives such that $v_{\pi(1)}^e \geq \ldots \geq v_{\pi(p)}^e$ for every edge $e$, an optimal solution can be obtained by valuing every edge $e$ by $\sum_{i=1}^p w_i v_{\pi(i)}^e$, and then applying a standard minimum spanning tree algorithm.
\end{itemize}
Note that these types of polynomial instances are quite
uncommon. Nevertheless, one can design a \emph{Fully Polynomial Time
  Approximation Scheme} (FPTAS) that works for any instance of
$\owa$-ST. For this purpose, following Aissi et al. \cite{AisBV07},
one can enrich the existing FPTAS for the determination of an
approximate Pareto set in the multiobjective spanning tree problem
\cite{PapaY00}. This algorithm indeed returns a set ${\mathcal
  T_{\varepsilon}}$ of spanning trees, the cardinality of which is
polynomial in the size of the instance. For every spanning tree $T$ in
the graph, there exists $T_{\varepsilon} \in {\mathcal
  T_{\varepsilon}}$ such that $f_i(T_{\varepsilon}) \leq
(1+\varepsilon) f_i(T)$ for $i=1, \ldots, p$. Note that $[y_i \leq
y_i' \quad \forall i] \Rightarrow$ $\owa(y)$ $\leq$ $\owa$($y'$)
\cite{fodMR95}, and that $\owa$($(1+\varepsilon)y$) $= (1+\varepsilon)
\owa$($y$). Thus, a spanning tree $T^*_{\varepsilon}$ which is
OWA-optimal among the trees of ${\mathcal T_{\varepsilon}}$ satisfies
$\owa(f(T^*_{\varepsilon})) \leq (1+\varepsilon) \owa(f(T^*))$, where
$T^*$ is an $\owa$-optimal spanning tree in ${\mathcal T}$. We have
therefore an FPTAS for problem $\owa$-ST (the algorithm is indeed
polynomial since the search for $T^*_{\varepsilon}$ is performed in a
set of polynomial cardinality). The impact of this result is however
mainly theoretical, due to the high computational complexity of the
polynomial time procedure proposed by Papadimitriou and Yannakakis to compute an approximate Pareto set \cite{PapaY00}. For this reason,
we provide in the following more operational \emph{exact} algorithms (the
complexity of which is exponential in the worst case) for problem
$\owa$-ST.  The validity of the proposed algorithms depends on the
subclass of weights used in the $\owa$ criterion: strictly decreasing
or arbitrary (including the case of non-monotonic weights).

\section{Resolution method for strictly decreasing weights}

The strict decreasingness of the weights implies the convexity of the $\owa$ function. It is well-known that convexity of the objective function is a nice property in minimization problems. In particular, this property makes it possible to linearize objective function $\varphi = \owa$ in mathematical programming formulation $P$ of Section 2, so as to get a mixed integer programming formulation of problem $\owa$-ST. This is the topic of the following subsection.

\subsection{MIP formulation}
In the way of Yaman et al. \cite{YamKP01} for the robust spanning tree problem with interval data, we start from a compact mixed integer programming (MIP) formulation of the minimum spanning tree problem proposed by Magnanti and Wolsey \cite{MagnW95}. In this formulation the minimum spanning tree problem is considered as a special version of a flow problem. Every edge $e=[i,j]$ is replaced by two opposite arcs $(i,j)$ and $(j,i)$. The set of such arcs is denoted by $A$ in the sequel. Some vertex of the graph --- say 1 --- is selected as a source, and $n-1$ units of flow are incoming into it (where $V = \{1,\cdots,n\}$). Furthermore, 1 unit of flow is outgoing of every vertex $i \neq 1$. To each feasible flow in the directed graph, one can associate a connected partial graph by selecting edge $e=[i,j]$ as soon as there is at least one unit of flow on $(i,j)$ or $(j,i)$. By imposing that the number of selected edges is $n-1$, one obtains a spanning tree.
Let $\varphi_{ij}$ denote the flow on arc $(i,j)$, $x^e$ denote the boolean variable taking value 1 if $\varphi_{ij}>0$ or 
 $\varphi_{ji}>0$ (for $e =[i,j]$), and $v^e$ denote the scalar value of edge $e$. The corresponding MIP formulation of the minimum spanning tree problem is:

$$(P_{\mbox{\sc mst}})\left\{\begin{array}{ll}
 \min & \displaystyle \sum_{e\in E} v^e x^e\\
  \mbox{s.t.} & \displaystyle \sum_{(i,j) \in A} \varphi_{ij} - \displaystyle \sum_{(j,i) \in A} \varphi_{ji} = \left\{\begin{array}{ll}
	n - 1 & \mbox{if } i=1,\\
	-1 & \forall i \in V \setminus \{1\},
	\end{array}\right.\\ [5ex]	
&\varphi_{ij} \leq (n-1) x^e \quad \forall e =[i,j],	 \\
&\varphi_{ji} \leq (n-1) x^e \quad \forall e =[i,j],\\ [2ex]	
&\displaystyle \sum_{e \in E} x^e = 1,\\
&\varphi \geq 0, ~x^e \in \{0,1\} \quad \forall e \in E.	 
\end{array}\right.$$   

We modify and enrich this MIP formulation in order to take into account the multiobjective aspect of $\owa$-ST, as well as the $\owa$ aggregation function. The multiobjective aspect can be modelled by simply adding constraints $y_i = \sum_{e\in E} v_i^e x^e$ $\forall i \in \{1,\ldots,p\}$ into $P_{\mbox{\sc mst}}$, where $y_i$ is the variable corresponding to the value of the spanning tree on the $i^{th}$ component. The objective function is then $\owa(y) = \sum_{i=1}^p w_i y_{(i)}$, yielding a new program denoted by $P_{\mbox{\sc owa}}$. Nevertheless, in this formulation, this program is not linear. A nice linearization of the objective function has been proposed by Ogryczak and Sliwinski \cite{OgryS03}. The key idea is to use the Lorenz vector of $y$, the $i^{th}$ component of which is defined, in a minimization setting, by $L_i(y) = \sum_{j=1}^i y_{(j)}$. This notion has been introduced in economics for inequality comparisons (in a maximization setting), where $y$ is seen as an income distribution \cite[e.g.][]{Mouli91}. By noting that $\owa(y) = \sum_{i=1}^p (w_i - w_{i+1}) L_i(y)$, where $w_{p+1} = 0$, it appears that function $\owa$ is linear in the components of the Lorenz vector. Furthermore, component $L_i(y)$ is the solution of linear program $P_i^y$ defined below. However, when $y$ is a variable, program $P_i^y$ is no more linear. To overcome this difficulty, it is worth considering the dual version $D_i^y$, where $d_j^i$ (resp. $r_i$) are the dual variables for the inequality (resp. equality) constraints:
$$
(P_i^y)\begin{array}{ccc}
\left\{\begin{array}{ll}
\max & \sum_{j=1}^p \alpha_j^i y_j\\
\mbox{s.t.} & \sum_{j=1}^p \alpha_j^i = i,\\
& 0 \le \alpha_j^i \le 1 \quad \forall j \in \{1,\ldots,p\}.
\end{array}\right.
& ~~~~ & 
(D_i^y)\left\{\begin{array}{ll}
\min & i r_i + \sum_{j=1}^p d_j^i\\
\mbox{s.t.} & r_i + d_j^i \ge y_j \quad \forall j \in \{1,\ldots,p\},\\
& d_j^i \ge 0 \quad \forall j \in \{1,\ldots,p\}.
\end{array}\right.
\end{array}
$$ 
Replacing $L_i(y)$ by program $D_i^y$ for $i=1,\ldots,p$, program $P_{\mbox{\sc owa}}$ becomes:

$$(P_{\mbox{\sc owa}})\left\{\begin{array}{ll}
\min & \displaystyle\sum_{i=1}^p (w_i  - w_{i+1}) \displaystyle(i r_i + \sum_{j=1}^p d_j^i\displaystyle)\\
\mbox{s.t.} & r_i + d_j^i \ge y_j \quad \forall i,j \in \{1,\ldots,p\},\\
& y_i = \sum_{e \in E} v_i^e x^e$ $\forall i \in \{1,\ldots,p\},\\
 & \displaystyle \sum_{(i,j) \in A} \varphi_{ij} - \displaystyle \sum_{(j,i) \in A} \varphi_{ji} = \left\{\begin{array}{ll}
	n - 1 & \mbox{if } i=1,\\
	-1 & \forall i \in V \setminus \{1\},
	\end{array}\right.\\ [5ex]	
& \varphi_{ij} \leq (n-1) x^e \quad \forall e =[i,j],	 \\
& \varphi_{ji} \leq (n-1) x^e \quad \forall e =[i,j],\\ [2ex]	
& \displaystyle \sum_{e \in E} x^e = 1,\\
& d_i^j \ge 0 \quad \forall i,j \in \{1,\ldots,p\} \mbox{ and } r_i \mbox{ unrestricted},\\
&\varphi \geq 0, ~x^e \in \{0,1\} \quad \forall e \in E.	 
\end{array}\right.$$ 

Note that this formulation is valid only when $w_i-w_{i+1}>0$ for $i=1,\ldots,p$, which is the case here since the $w_i$'s are assumed to be strictly decreasing. With $m$ edges in the graph, the program involves $p^2+n+2m+1$ constraints and $p^2+p+3m$ variables (variables $y_i$'s can be omitted in the implementation): its size is therefore linear in the size of the input for a fixed number of objectives. In order to further reduce the number of variables and constraints involved in $P_{\mbox{\sc owa}}$, we now provide a preprocessing method that enables to reduce the density of the graph.

\subsection{Preprocessing of the graph}

As often done when presenting algorithms for constructing spanning trees
\cite[e.g.][]{Tarja83}, we describe our preprocessing phase as an edge
coloring process. Initially all edges are uncolored. We color one edge at a time either blue (mandatory) or red
(forbidden). At each step of the preprocessing phase, we have therefore a partial coloring $c(\cdot)$ of the edges of  
$G$. For a partial coloring $c$, we denote by $\Sol(c)$ the set of
trees including all blue edges, some of the uncolored ones (possibly
none), and none of the red ones. The aim of the preprocessing phase is to color as many edges as possible, so as to get a maximal coloring $c$ such that $\min_{T \in \Sol(c)} \owa(f(T)) = \min_{T \in \Sol} \owa(f(T))$. We now give conditions under which an edge can be colored blue or red, which are adaptations of the well-known \emph{cut optimality condition} and \emph{cycle optimality condition} to our problem. 

Before introducing the optimality conditions, we recall some definitions
from graph theory. A \emph{cut} in a graph is a partition of the vertices
into two disjoint sets and a \emph{crossing edge} (with respect to a cut)
is one edge that connects a vertex in one set to a vertex in the other one.
When there is no ambiguity, the term \emph{cut} will also be used to refer
to the set of crossing edges defined by the partition of the vertices.

\begin{prop}[optimality conditions]
Let $G$ be a connected graph with coloring $c$ of the edges. The following
properties hold:\\ %
$(i)$ \emph{Cut optimality condition.} Let us consider a cut $C$ in $G$ with
no blue edge. If there is some uncolored edge $e\in C$ such that $\owa(v^e - v^{e'}) \leq 0$ for all uncolored edges $e'\in C$, then $e$ belongs to an OWA-optimal tree in $\Sol(c)$.\\
$(ii)$ \emph{Cycle optimality condition.} Let us consider a cycle $C$ in
$G$ containing no red edge. If there is some uncolored edge $e\in C$ such
that $\owa(v^{e'}-v^e) \leq 0$ for all uncolored edges $e'\in C$, then $e$ can be colored red without changing value $\min_{T \in \Sol(c)} \owa(f(T))$.
\label{cond}
\end{prop}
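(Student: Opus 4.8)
The plan is to reduce both conditions to a single exchange argument, with all of the analytic content carried by the \emph{subadditivity} of $\owa$, i.e.\ $\owa(y+z)\le\owa(y)+\owa(z)$ for all $y,z\in\RP$. I first want to record that this property is available in the present setting: since the weights are strictly decreasing the excerpt already observes that $\owa$ is convex, and combining convexity with the positive homogeneity $\owa((1+\varepsilon)y)=(1+\varepsilon)\owa(y)$ gives $\owa(y+z)=2\,\owa(\tfrac12 y+\tfrac12 z)\le\owa(y)+\owa(z)$. Equivalently, writing $\owa(y)=\sum_{i=1}^p(w_i-w_{i+1})L_i(y)$ with non-negative coefficients (the weights being non-increasing) reduces the claim to the subadditivity of each $L_i(y)=\max_{|S|=i}\sum_{j\in S}y_j$, which is immediate.

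For $(i)$ I would start from an arbitrary $\owa$-optimal tree $T^*$ in $\Sol(c)$; if $e\in T^*$ there is nothing to prove, so assume $e\notin T^*$. Then $T^*+e$ contains a unique cycle, and since every cycle crosses the cut $C$ an even number of times, this one must contain a second crossing edge $e'\in C$ with $e'\in T^*$. Because $T^*\in\Sol(c)$ the edge $e'$ is not red, and because $C$ has no blue edge it is not blue either; hence $e'$ is uncolored and the hypothesis yields $\owa(v^e-v^{e'})\le 0$. The swapped tree $T'=T^*-e'+e$ is again a spanning tree lying in $\Sol(c)$ (we delete one uncolored edge and add another, leaving every blue edge in place and introducing no red edge), and $f(T')=f(T^*)+(v^e-v^{e'})$. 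Subadditivity then gives $\owa(f(T'))\le\owa(f(T^*))+\owa(v^e-v^{e'})\le\owa(f(T^*))$, so $T'$ is $\owa$-optimal and contains $e$.

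Part $(ii)$ is the mirror image. I would take an $\owa$-optimal $T^*\in\Sol(c)$; if $e\notin T^*$ then $T^*$ survives the coloring of $e$ in red and we are done, so assume $e\in T^*$. Deleting $e$ splits $T^*$ into two components, and the path $C-e$ must reconnect them through some edge $e'\in C$ with $e'\ne e$ and $e'\notin T^*$ (no $T^*$-edge other than $e$ joins the two components). As $C$ has no red edge, $e'$ is not red, and it cannot be blue because every blue edge already lies in $T^*$; thus $e'$ is uncolored and $\owa(v^{e'}-v^e)\le 0$. The tree $T'=T^*-e+e'$ lies in $\Sol(c)$, avoids $e$, and satisfies $f(T')=f(T^*)+(v^{e'}-v^e)$, so subadditivity gives $\owa(f(T'))\le\owa(f(T^*))$; hence there is an optimal tree not using $e$, and coloring $e$ red leaves $\min_{T\in\Sol(c)}\owa(f(T))$ unchanged.

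The combinatorial skeleton is exactly that of the classical cut and cycle optimality proofs for scalar minimum spanning trees; the one genuinely new ingredient is replacing the scalar comparison $v^e\le v^{e'}$ by $\owa(v^e-v^{e'})\le 0$ and pushing it through the nonlinear objective. I expect the main obstacle to be precisely this last step --- guaranteeing that the per-edge comparison survives aggregation --- which is why the whole argument rests on subadditivity of $\owa$ and, through it, on the non-increasingness (here strict decreasingness) of the weights; for genuinely non-monotonic weights subadditivity fails and this line of reasoning would no longer apply.
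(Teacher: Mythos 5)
Your proof is correct and follows essentially the same route as the paper's: the same cut/cycle exchange argument, powered by the same key inequality (your subadditivity $\owa(y+z)\le\owa(y)+\owa(z)$ is exactly the paper's $\owa(y-y')\ge\owa(y)-\owa(y')$ in disguise, and you derive it identically from convexity plus positive homogeneity). Your explicit justification that the second crossing edge $e'$ must be uncolored, and the remark that the argument breaks for non-monotonic weights, are welcome refinements but do not change the approach.
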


\begin{proof}
The proof relies on the following property of function $\owa$ with non-increasing weights: $\owa(y-y') \ge \owa(y) - \owa(y')$. To prove this property, one uses the convexity of function $\owa$, which follows from the fact that $\owa(y) = \max_{\pi \in \Pi} \sum_i w_i y_{\pi(i)}$ where $\Pi$ is the set of all possible permutations of $(1,\ldots,p)$.
Combining convexity of $\owa$ and equality $\owa(\lambda y) = \lambda \owa(y)$, one deduces thus the required inequality: $\owa(y-y') + \owa(y') = 2 (\frac{1}{2}\owa(y-y')+\frac{1}{2}\owa(y')) \ge 2 \owa(\frac{1}{2}(y-y')+ \frac{1}{2}y') = \owa(y)$.

\emph{Proof of $(i)$.} Suppose there exists a cut $C$ and an uncolored crossing
edge $e$ that satisfies the cut optimality condition. Let $T$ be an $\owa$-optimal
spanning tree of $\Sol(c)$ that does not contain $e$. Now
consider the graph formed by adding $e$ to $T$. This graph has a cycle
that contains $e$, and this cycle must contain at least one other
uncolored crossing edge --- say $e'$ --- such that $e'\in C$, and
therefore we have $\owa(v^e-v^{e'}) \le 0$. We can get a new spanning tree $T' \in \Sol(c)$ by deleting $e'$ from $T$ and adding $e$. We claim that $\owa(f(T')) \le \owa(f(T))$. Cost $f(T')$ is indeed equal to $f(T) -
v^{e'} + v^e$. By the property indicated above, we have $\owa(f(T)-v^{e'}+v^e) -  \owa(f(T)) \le \owa(f(T)-v^{e'}+v^e - f(T)) = \owa(v^e-v^{e'})$. Consequently, since $\owa(v^e-v^{e'}) \le 0$, we deduce $\owa(f(T')) = \owa(f(T)-v^{e'}+v^e) \le  \owa(f(T))$, and $T'$ is therefore an $\owa$-optimal spanning tree in $\Sol(c)$ (that does contain $e$).

\emph{Proof of $(ii)$.} Suppose there exists a cycle $C$ containing
no red edge with an uncolored edge $e\in C$ such that $\owa(v^{e'}-v^e) \le 0$ for all
uncolored edges $e'\in C$. Let $T$ be an $\owa$-optimal
spanning tree of $\Sol(c)$ that contains $e$. Now consider the graph formed by removing $e$
from $T$. This graph is compounded of two connected components. The
induced cut contains at least one other uncolored crossing edge --- say
$e'$ --- such that $e' \in C$, and therefore $\owa(v^{e'}-v^e) \le 0$. We can get a new
spanning tree $T' \in \Sol(c)$ by deleting $e$ from $T$ and adding $e'$.
We claim that $\owa(f(T')) \le \owa(f(T))$. Cost $f(T')$ is indeed equal to $f(T) - v^e + v^{e'}$. By the property indicated above, we have $\owa(f(T) - v^e + v^{e'}) -  \owa(f(T)) \le \owa(v^{e'} - v^e)$. Consequently, since $\owa(v^{e'} - v^e) \le 0$, we deduce $\owa(f(T')) \le  \owa(f(T))$, and $T'$ is therefore an $\owa$-optimal spanning tree in $\Sol(c)$ (that does not contain $e$).
\end{proof}

Let us come back to the instance of Example~\ref{EXP1} to illustrate how to color edges according to these conditions.

\begin{example}
In the clique of Example~\ref{EXP1}, by considering cut $\{[1,4],[2,4],[3,4]\}]$, edge $[1,4]$ can be colored blue since $\owa(v^{[1,4]}-v^{[2,4]}) = \owa(-1,-4,1) = -0.6$ (with $w_1 = 0.5$, $w_2 = 0.3$ and $w_3 = 0.2$) and $\owa(v^{[1,4]}-v^{[3,4]}) = \owa(0,-3,1) = -0.1$. Furthermore, by considering cycle $\{[2,4],[2,3],[3,4]\}$, edge $[2,4]$ can be colored red since $\owa(v^{[2,3]}-v^{[2,4]}) = \owa(0,-2,0) = -0.4$ and $\owa(v^{[3,4]}-v^{[2,4]}) = \owa(-1,-1,0) = -0.5$. 
\end{example}

Note that Pareto-dominance of edge $e'$ over edge $e$ implies
$\owa(v^{e'}-v^{e}) \leq 0$ since all components of $v^{e'}-v^{e}$ are
negative in such a case (see edge $[2,4]$ in the previous
example). Our optimality conditions are thus an enrichment of the
multiobjective optimality conditions of Sourd and Spanjaard \cite{SourS08}, which are sufficient conditions for an edge to be mandatory or forbidden when generating the whole Pareto set. These conditions are indeed exclusively based on Pareto dominance between edges. We take here advantage of the knowledge of the OWA operator to optimize so as to be able to color a higher number of edges.

The single objective versions of these conditions make it possible to
design a generic greedy method \cite[e.g.][]{Tarja83}, from which
Kruskal's and Prim's algorithms can be derived. For problem $\owa$-ST, this method can be adapted so as to preprocess the graph prior to the resolution of the MIP formulation by a solver. The corresponding algorithm is indicated in Figure~\ref{onepass}. Obviously, and contrarily to the single objective case, this preprocessing method does not yield a spanning tree since binary relation $\owa(v^e-v^{e'}) \le 0$ does not induce a complete ordering of the edges in $E$. It enables however an important reduction of the number of variables (see numerical experiments), and therefore of the resolution time.

\begin{figure}
\begin{algo}{\sc Preprocessing($G$,$c$)}{A graph $G$ with coloring $c$ of edges}{A maximal coloring $c$}
\keyw{for each} edge $e$ of $E$ \keyw{do}\\
\tab \keyw{if} the cut optimality condition holds for $e$ \keyw{then}\\
\tab \tab set $c(e) = \blue$ \\
\tab \keyw{else} \keyw{if} the cycle optimality condition holds for $e$
\keyw{then}\\
\tab \tab set $c(e) = \red$ \\
\keyw{return} $c$
\end{algo}
\caption{\label{onepass} Preprocessing algorithm in the case of strictly decreasing weights.}
\end{figure}

The complexity of the preprocessing method strongly depends on the complexity of
detecting uncolored edges satisfying an optimality condition. In practice,
to determine whether an uncolored edge $e = [i,j]$ satisfies the cut
optimality condition, one performs a depth first search from $i$ in the
partial graph $G^\text{cut}_e = (V,E^\text{cut}_e)$ where $E^\text{cut}_e
= \{e' \in E\,|\,\owa(v^e-v^{e'})>0\} \cup \{e' \in E\,|\,c(e') =
\mbox{\blue}\}$. If $j$ does not belong to the set of visited vertices, then the
partition between visited and non-visited vertices constitutes a cut for
which $e$ satisfies the cut optimality condition. Similarly, to determine
whether an uncolored edge $e = [i,j]$ satisfies the cycle optimality
condition, one performs a depth first search from $i$ in the partial graph
$G^\text{cyc}_e = (V,E^\text{cyc}_e)$ where $E^\text{cyc}_e = \{e' \in
E\,|\,\owa(v^{e'}-v^e) \le 0\}\setminus \{e\} \cup \{e' \in E\,|\,c(e') =
\mbox{\blue}\}$. If $j$ is visited, then the chain from $i$ to $j$ in the
search tree, completed with $[i,j]$, constitutes a cycle for which $e$
satisfies the cycle optimality condition. Since the number of edges in the
graph is $m$ and the complexity of a depth first search is within $O(m)$
in a connected graph, the complexity of the overall preprocessing is $O(m^2)$.

\section{Resolution methods for arbitrary weights}
\label{sec:Bounds}

The linearization of the objective function is not valid anymore when
the weights are not strictly decreasing. Facing the difficulty induced
by the non-linearity of the objective function, we propose to resort
to a branch and bound procedure when the weights are not necessarily
strictly decreasing. For such procedures, the efficient computation of
bounds on the optimal value of problem $\owa$-ST is therefore worth
investigating. Besides, even when the weights are strictly decreasing, such bounds can be used in a shaving procedure (see Subsection 4.4 for a description) performed before the resolution phase in order to reduce the size of the instance. Thus, to obtain lower bounds, we present here two alternative relaxation methods for a multiobjective compromise search problem $P$: $\min
\varphi(y) \mbox{ s.t. } y=f(x),\, x \in X$, as defined in Subsection
\ref{sec:comp}.
\begin{itemize}
\item \emph{Relaxation of the image set:} it
  consists in defining a subspace $Y' \supseteq Y$ of the objective
  space (we recall that $Y = f(X)$) and solving problem $P_{Y'}$ defined by:      
  $$(P_{Y'}) ~~~\left\{
    \begin{array}{ll}
      \min ~~ \varphi(y)\\
      \mbox{s.t. } y  \in Y'
    \end{array}\right. $$    
\item \emph{Relaxation of the objective function:} it consists in
  defining a function $\varphi': \RP \rightarrow \R$ such that for all
  images $y$ in $Y$, $\varphi'(y) \leq \varphi(y)$ and solving problem $P_{\varphi'}$ defined by:
  $$(P_{\varphi'})~~~ \left\{
    \begin{array}{ll}
      \min ~~ \varphi'(y)\\
      \mbox{s.t. } y  =  f(x)\\
      ~~~~~x  \in  X
    \end{array}\right. $$ 
\end{itemize}     

  The main point is then to define $Y'$ (resp. $\varphi'$) such that
  the optimal value of problem $P_{Y'}$ (resp. $P_{\varphi'}$) can be
  efficiently computed and provide the tightest bound as possible. In
  this concern, it seems opportune to consider some continuous
  relaxations of $Y$ (resp. linear aggregation function
  $\varphi'$). In the following, we show more precisely how these relaxations can
  be performed when the aggregation function is an ordered weighted
  average. Note that, in both relaxations presented in the following,
  the computational efficiency of the procedure (to solve $P_{Y'}$ or $P_{\varphi'}$) only depends on the ability to quickly solve the single objective version of the
  problem. Consequently, these procedures can be applied to a broad
  spectrum of problems for which the single objective version can be
  solved efficiently. 

\subsection{Defining $Y'$ and solving problem $P_{Y'}$}

We now present a first bound on the value of an optimal solution to
$P$, obtained by relaxation of the image set. For this purpose,
let us define relaxed space $Y'$ in program $P_{Y'}$ as follows: 
$$
Y'=\{y \in \RP:y_i \geq b_i \quad \forall i=0,\ldots,p\}
$$
where $b_i = \min\{f_i(x):x \in X\}$ denotes the value of an optimal solution
according to $f_i$, and $f_0 = \sum_{i=1}^p f_i$. The values $b_i$'s
($i = 1, \ldots, p$) are obtained by $n$ runs of a standard algorithm
for the single objective version of the problem (provided it can be efficiently solved) for valuations $f_i$'s successively ($i = 1, \ldots,
p$). Concerning the computation of $b_0$, for simplicity purpose, we
only detail the case of multiobjective spanning tree problems. The
value $b_0$ is obtained by applying a standard minimal spanning tree
algorithm on the graph where each edge $e$ is valued by $v_0^e =
\sum_{i=1}^p v_i^e$: for any spanning tree characterized by $x$, we have indeed $f_0(x) = \sum_{i=1}^p f_i(x) =
\sum_{i=1}^p (\sum_{e \in E} v_i^e x^e) = \sum_{e \in E} (\sum_{i=1}^p
v_i^e) x^e = \sum_{e \in E} v_0^e x^e$. Note that this technique to compute
$b_0$ extends to many multiobjective versions of classical
combinatorial optimization problems (actually, as soon as the value of
a feasible solution is additively decomposable over its elements).

As indicated above, in order for the relaxation to be interesting in practice, one needs to provide an efficient procedure
to compute the optimal value of program $P_{Y'}$. When the objective function is $\owa$, one clearly needs to take into account the way
the components of solution $y$ are ordered for solving $P_{Y'}$. The OWA
function is actually piecewise linear. More precisely, it is linear
within each subspace of $\RP$ where all vectors are comonotonic,
i.e. for any pair $y,y'$ of vectors, there exists a permutation $\pi$
of $(1,\ldots,p)$ such that $y_{\pi(1)}$ $\geq$ $\ldots$ $\geq$
$y_{\pi(p)}$ and ${y'}_{\pi(1)}$ $\geq$ $\ldots$ $\geq$
${y'}_{\pi(p)}$. Problem $P_{Y'}$ can thus be divided into several
subproblems, each one focusing on a particular comonotonic subspace of
$\RP$. The solution of $P_{Y'}$ reduces to solving each linear program $P_{Y',\pi}$
defined by a particular permutation $\pi$ of $(1,\ldots,p)$:  
\vspace{-0.15cm}   
\begin{eqnarray*} (P_{Y',\pi}) \ \left \{ \begin{array}{ll}
\min\sum_{i=1}^p w_i y_{\pi(i)}  \\
\begin{array}{rrclllr}
\mbox{s.t.} & y_{\pi(i)} & \geq & y_{\pi(i+1)} & \forall i\in\{1,\ldots,p-1\}, &(1.1)\\
& y_i & \geq & b_i & \forall i\in\{1,\ldots,p\}, &(1.2)\\
& \sum_{i=1}^p y_i & \geq & b_0, & &(1.3)\\
& y & \in & \mathbb{R}^p.
\end{array}
\end{array} \right.
\end{eqnarray*} 
The value of the optimal solution $y^*_{Y'}$ to $P_{Y'}$ is then $\min_{\pi \in \Pi}
\owa(y^*_{Y',\pi})$ where $y^*_{Y',\pi}$ denotes the optimal
solution to linear program $P_{Y',\pi}$ and $\Pi$ the set of all
possible permutations. Note that for $p$ components, there are
$|\Pi|$=$p!$ linear programs to solve. However, in practice, it is not
necessary to solve the $p!$ linear programs. There exists indeed an
easily computable permutation $\pi^*$ for which $\owa(y^*_{Y'}) =
\owa(y^*_{Y',\pi^*})$: 

\begin{prop}[Galand and Spanjaard, 2007 \cite{GalaS07}]\label{PROP1}
  Let $\pi^*$ denote the permutation such that $b_{\pi^*(1)} \ge
  b_{\pi^*(2)} \ge \ldots \ge b_{\pi^*(p)}$. For any feasible
  solution $y$ to $P_{Y',\pi}$, there exists a feasible solution $y'$
  to $P_{Y',\pi^*}$ such that $\owa(y') = \owa(y)$.
\end{prop}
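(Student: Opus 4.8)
The plan is to construct $y'$ from $y$ purely by rearranging coordinates. The key observation driving the whole argument is that $\owa$, the sum constraint $(1.3)$, and the ordering constraint $(1.1)$ all depend only on the multiset of values of $y$, so they are automatically preserved under any permutation of the coordinates; the only genuine difficulty will be to show that the individual lower bounds $(1.2)$ can still be met after reordering.

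First I would note that inside the feasible region of $P_{Y',\pi}$ the objective $\sum_i w_i y_{\pi(i)}$ equals $\owa(y)$: constraint $(1.1)$ forces $y_{\pi(1)} \ge \ldots \ge y_{\pi(p)}$, so $y_{\pi(i)}$ is exactly the $i$-th largest component $y_{(i)}$. It therefore suffices to exhibit a point feasible for $P_{Y',\pi^*}$ whose sorted vector coincides with that of $y$. To this end I would define $y'$ coordinate-wise by placing the $k$-th largest value of $y$ at position $\pi^*(k)$, that is $y'_{\pi^*(k)} = y_{\pi(k)} = y_{(k)}$ for $k=1,\ldots,p$. By construction $y'$ is a permutation of $y$, so $\sum_i y'_i = \sum_i y_i \ge b_0$ (constraint $(1.3)$ holds), $y'_{\pi^*(1)} \ge \ldots \ge y'_{\pi^*(p)}$ (constraint $(1.1)$ holds for $\pi^*$), and $\owa(y') = \sum_k w_k y_{(k)} = \owa(y)$, which is the equality we want.

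The main obstacle is verifying the remaining lower bounds $(1.2)$, namely $y'_i \ge b_i$ for every $i$. By the definition of $y'$ and of $\pi^*$ this reduces to the inequalities $y_{(k)} \ge b_{\pi^*(k)} = b_{(k)}$ for all $k$, where $b_{(k)}$ denotes the $k$-th largest of the $b_i$. This is a standard order-statistics monotonicity fact: since $y$ is feasible for $P_{Y',\pi}$ we have $y_i \ge b_i$ componentwise, and a componentwise inequality is preserved after sorting both vectors in non-increasing order. I would establish it by a short counting argument: the indices realizing the $k$ largest values of $b$ form a set $S$ of cardinality $k$ on which $y_i \ge b_i \ge b_{(k)}$, hence at least $k$ components of $y$ are $\ge b_{(k)}$, so the $k$-th largest component satisfies $y_{(k)} \ge b_{(k)}$. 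Combining this inequality with the three permutation-invariance observations above shows that $y'$ is feasible for $P_{Y',\pi^*}$ and has the same $\owa$ value as $y$, completing the proof.
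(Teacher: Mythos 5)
Your proof is correct, and it reaches the same target vector as the paper --- the $y'$ obtained by placing the $k$-th largest component $y_{(k)}$ at position $\pi^*(k)$ --- but it verifies feasibility by a genuinely different and more direct argument. The paper decomposes the passage from $\pi$ to $\pi^*$ into a sequence of elementary transpositions: at each step it swaps two positions $i_0<i_1$ forming an inversion ($b_{\pi(i_0)}<b_{\pi(i_1)}$), checks that constraints $(1.1)$--$(1.3)$ survive that single swap, and iterates until $\pi^*$ is reached. You instead dispose of $(1.1)$, $(1.3)$ and the $\owa$-equality in one stroke by permutation invariance, and isolate the only real content --- constraint $(1.2)$ --- as the order-statistics fact that $y\ge b$ componentwise implies $y_{(k)}\ge b_{(k)}$ for all $k$, which your counting argument (at least $k$ coordinates of $y$ are $\ge b_{(k)}$, namely those indexed by the $k$ largest $b_i$'s) proves cleanly. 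Your route avoids the bookkeeping of the inductive sequence $(y^j,\pi^j)$ and the slightly hand-wavy ``any permutation is a product of elementary permutations'' termination step, at the cost of invoking (and proving) the sorting-monotonicity lemma explicitly; the paper's route keeps every step elementary but is longer and its per-swap verification must be trusted to compose correctly over the whole sequence. Both are valid; yours is the tighter write-up.
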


\begin{pf} The idea is to determine a feasible solution $y'$ to
  $P_{Y',\pi^*}$ such that $y'_{(i)}$=$y_{(i)}$ $\forall i$ (where $y_{(1)} \ge \ldots \ge y_{(p)}$). It
  implies indeed $\owa(y) = \owa(y')$ and the conclusion is then
  straightforward. In this respect, we construct a sequence
  $(y^j)_{j=1,\ldots,k}$ of solutions and a sequence
  $(\pi^j)_{j=1,\ldots,k}$ of permutations such that $y^j$ is feasible
  for $P_{Y',\pi^j}$ (for $j$=$1,\ldots,k$), with $y^1$=$y$,
  $\pi^1$=$\pi$, $\pi^k$=$\pi^*$ and $y^1_{(i)}$=$y^2_{(i)}$= $\ldots$
  =$y^k_{(i)}$ $\forall i$. Assume there exist $i_0,i_1\in$
  $\{1,\ldots,p\}$ such that $i_0$ $<$ $i_1$ and
  $b_{\pi^1(i_0)}$ $<$ $b_{\pi^1(i_1)}$. Let permutation $\pi^2$ be
  defined by $\pi^2(i_0)$=$\pi^1(i_1)$,
  $\pi^2(i_1)$=$\pi^1(i_0)$, and $\pi^2(i)$=$\pi^1(i)$ $\forall i$
  $\neq$ $i_0,i_1$. Let solution $y^2$ be defined by
  $y^2_{\pi^2(i)}$=$y^1_{\pi^1(i)}$ for $i$=$1,$ $\ldots,$ $p$. We now
  show that $y^2$ is a feasible solution to $P_{Y',\pi^2}$. Note first that
  $y^1_{\pi^1(i_0)}$ $\geq$ $b_{\pi^1(i_0)},$ $y^1_{\pi^1(i_1)}$
  $\geq$ $b_{\pi^1(i_1)},$ $y^1_{\pi^1(i_0)}$ $\geq$
  $y^1_{\pi^1(i_1)}$ and $b_{\pi^1(i_1)}$ $>$
  $b_{\pi^1(i_0)}$. Hence, constraints 1.2 are
  satisfied since:\\ 
  $\bullet$ $y^2_{\pi^2(i_0)}$=$y^1_{\pi^1(i_0)}$ $\geq$
  $y^1_{\pi^1(i_1)}$ $\geq$
  $b_{\pi^1(i_1)}$=$b_{\pi^2(i_0)}$,\\ 
  $\bullet$ $y^2_{\pi^2(i_1)}$=$y^1_{\pi^1(i_1)}$ $\geq$
  $b_{\pi^1(i_1)}$ $>$
  $b_{\pi^1(i_0)}$=$b_{\pi^2(i_1)}$,\\ 
  $\bullet$ $y^2_{\pi^2(i)}$=$y^1_{\pi^1(i)}$ $\geq$
  $b_{\pi^1(i)}$= $b_{\pi^2(i)}$ for $i\neq
  i_0,i_1$.\\ 
  Constraints 1.1 are also satisfied since $[y^1_{\pi^1(i)}$ $\ge$
  $y^1_{\pi^1(i+1)}\;$ $\forall i]$ $\Rightarrow$ $[y^2_{\pi^2(i)}$
  $\ge$ $y^2_{\pi^2(i+1)}$ $\forall i]$. Indeed, we have
  $y^1_{\pi^1(i)}$=$y^2_{\pi^2(i)}$ $\forall i$. These equalities
  imply also that constraint 1.3 is satisfied and that $y^2_{(i)}$=
  $y_{(i)}$ $\forall i$. Solution $y^2$ is therefore feasible for
  $P_{Y',\pi^2}$ with $y^2_{(i)}$=$y_{(i)}$ $\forall i$. Since any
  permutation is the product of elementary permutations, one
  always can construct in this way a sequence of permutations that
  leads to $\pi^*$ (and the corresponding feasible solutions). By
  setting $y'$=$y^k$, one obtains the desired feasible solution to
  $P_{Y',\pi^*}$. 
\end{pf} 

An immediate consequence of this result is that $\owa(y^*_{Y'}) =
\owa(y^*_{Y',\pi^*})$. Thus, the computation of $\owa(y^*_{Y'})$ 
reduces to solving linear program $P_{Y',\pi^*}$. For the sake of illustration, we present below an example.

\begin{example}
Let us come back to Example \ref{EXP1} where bounds $b_i$'s are defined from spanning trees $T_1$, $T_2$ and $T_3$ by $b_1=4$, $b_2=7$, $b_3=3$. The optimal spanning tree with respect to $f_0$ is spanning tree $T=\{[1,4],[2,3],[3,4]\}$ with $f_0(T) = 19$. The value of $b_0$ is then $19$.
It yields the following program $P_{Y'}$ (the values of $w_1$, $w_2$ and $w_3$ do not matter):
$$
\left\{
\begin{array}{ll}
\min & \mbox{ OWA}(y) = w_1 y_{(1)} + w_2 y_{(2)} + w_3 y_{(3)} \\
\mbox{s.t.} & y_1 \geq 4 \quad y_2 \geq 7 \quad y_3 \geq 3, \\
& y_1 + y_2 + y_3 \geq 19, \\
& y_1 \in {\mathbb R},\,y_2 \in {\mathbb R},\,y_3 \in {\mathbb R}.
\end{array}
\right.
$$
As a consequence of Proposition~\ref{PROP1}, solving $P_{Y'}$ amounts to solving linear program $P_{Y',\pi^*}$ defined by:
$$
\left\{
\begin{array}{ll}
\min & w_1 y_2 + w_2 y_1 + w_3 y_3 \\
\mbox{s.t.} & y_{2} \geq y_{1} \geq y_{3}, \\
& y_1 \geq 4 \quad y_2 \geq 7 \quad y_3 \geq 3, \\
& y_1 + y_2 + y_3 \geq 19, \\
& y_1 \in {\mathbb R},\,y_2 \in {\mathbb R},\,y_3 \in {\mathbb R}.
\end{array}
\right.
$$
We have indeed $\pi^*(1) = 2$, $\pi^*(2) = 1$ and $\pi^*(3) = 3$ since $b_2 \geq b_1 \geq b_3$. 
When $w_1 = 0.5$, $w_2=0.3$ and $w_3=0.2$, the value of the bound provided by programm $P_{Y',\pi^*}$ is therefore $6.5$ (we recall that the value of the $\owa$-optimal spanning tree is $7$).
\end{example}

\subsection{Defining $\varphi'$ and solving problem $P_{\varphi'}$}
\label{PPHI'}

The second bound for $P$ is obtained by relaxation of the objective function. The idea is to use a linear function $\varphi'$ defined by $\varphi'(y)=\sum_{i=1}^p \lambda_i y_i$ where the vector $\lambda=(\lambda_1,\ldots,\lambda_p)$ of coefficients satisfy the following constraints:
\begin{equation} \label{EQ1}
\sum_{i \in I} \lambda_i \leq \sum_{i=1}^{\vert I \vert} w_i \quad \mbox{ for all subset } I \subseteq \{1,\ldots,p\}  \mbox{ of objectives}
\end{equation}
When using such weight vectors, we have indeed $\varphi'(y) \leq \varphi(y)$ for all $y$ in $Y$, as established by the following: 

\begin{prop}\label{PROP2}
Let $\Lambda$ denote the set of all vectors $\lambda \in \RP$ satisfying Constraint~\ref{EQ1}. If $\lambda \in \Lambda$, then for all $y \in \RP$, $\sum_{i=1}^p \lambda_i y_i \leq \owa(y)$.
\end{prop}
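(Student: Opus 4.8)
The plan is to reduce the inequality to a statement about a single sorted vector and then exploit Constraint~\ref{EQ1} through a summation-by-parts argument. First I would fix an arbitrary $y \in \RP$ and choose a permutation $\sigma$ of $\{1,\ldots,p\}$ that sorts $y$ in non-increasing order, so that $y_{\sigma(1)} \ge \ldots \ge y_{\sigma(p)}$ and hence $\owa(y) = \sum_{k=1}^p w_k\, y_{\sigma(k)}$ by definition. Rewriting the linear form along the very same ordering, $\sum_{i=1}^p \lambda_i y_i = \sum_{k=1}^p \lambda_{\sigma(k)}\, y_{\sigma(k)}$, the claim becomes $\sum_{k=1}^p \delta_k\, z_k \ge 0$, where I set $z_k = y_{\sigma(k)}$ (a sequence with $z_1 \ge \ldots \ge z_p$) and $\delta_k = w_k - \lambda_{\sigma(k)}$.

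The key step is to apply Abel summation to $\sum_{k=1}^p \delta_k z_k$. Introducing the prefix sums $S_k = \sum_{j=1}^k \delta_j$, it rewrites as $\sum_{k=1}^{p-1} S_k (z_k - z_{k+1}) + S_p z_p$. Each increment $z_k - z_{k+1}$ is non-negative because the $z_k$ are sorted in non-increasing order, so everything comes down to controlling the signs of the prefix sums. This is exactly where Constraint~\ref{EQ1} enters: the index set $I_k = \{\sigma(1),\ldots,\sigma(k)\}$ has cardinality $k$, so (EQ1) applied to $I_k$ yields $\sum_{j=1}^k \lambda_{\sigma(j)} = \sum_{i \in I_k}\lambda_i \le \sum_{i=1}^{k} w_i$, which is precisely $S_k \ge 0$. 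Consequently every term $S_k(z_k - z_{k+1})$ with $k=1,\ldots,p-1$ is non-negative.

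It then remains only to dispose of the boundary term $S_p z_p$, and this is the step I expect to be the main obstacle, since $z_p = \min_i y_i$ need not be non-negative while the prefix inequality alone gives only $S_p \ge 0$ (taking $I = \{1,\ldots,p\}$ in (EQ1) together with $\sum_{i=1}^p w_i = 1$ yields $S_p = 1 - \sum_i \lambda_i \ge 0$). For the multipliers actually used as relaxations, where the total weight is saturated so that $\sum_i \lambda_i = 1$, one has $S_p = 0$ and the boundary term vanishes identically; combining this with the non-negativity of the interior terms gives $\sum_{k=1}^p \delta_k z_k \ge 0$, that is $\sum_{i=1}^p \lambda_i y_i \le \owa(y)$ for every $y \in \RP$. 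The whole difficulty is thus concentrated in handling this last term, the interior terms being settled directly by the subset constraints on the prefixes of $\sigma$.
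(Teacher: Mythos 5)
Your argument is the same as the paper's: sort $y$ by a permutation $\sigma$ (the paper calls it $\pi$), apply Abel summation, and use Constraint~\ref{EQ1} on the prefix sets $\{\sigma(1),\ldots,\sigma(k)\}$ to obtain $S_k=\sum_{j\le k}(w_j-\lambda_{\sigma(j)})\ge 0$ for every $k$; the paper writes this as $\owa(y)=\sum_{i=1}^p\bigl(\sum_{j=1}^i w_j\bigr)(y_{\pi(i)}-y_{\pi(i+1)})$ with the convention $y_{\pi(p+1)}=0$ and then replaces each prefix $\sum_{j\le i}w_j$ by $\sum_{j\le i}\lambda_{\pi(j)}$ termwise. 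The boundary term you isolate is a genuine issue, and you are right not to wave it away: with that convention the last increment is $y_{\pi(p)}-0=\min_i y_i$, and the paper applies the termwise inequality to it exactly as to the interior increments, which is legitimate only when $\min_i y_i\ge 0$. As literally stated, for all $y\in\RP$, the proposition is false: take $p=2$, $w=(1/2,1/2)$, $\lambda=(0,0)\in\Lambda$ and $y=(-1,-1)$; then $\sum_i\lambda_i y_i=0$ while $\owa(y)=-1$.

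Two repairs are available. In the paper's intended scope the vectors $y$ are images of spanning trees with positive edge costs, so $y\ge 0$, hence $z_p\ge 0$ and $S_p z_p\ge 0$ follows from $S_p\ge 0$, which closes the argument for every $\lambda\in\Lambda$. Your alternative repair, requiring $\sum_i\lambda_i=1$ so that $S_p=0$, handles arbitrary $y\in\RP$ but only for a proper subset of $\Lambda$, and the maximization over $\Lambda$ that follows the proposition does not impose this saturation (indeed $\lambda=0$ is feasible there). So as written your proof does not establish the statement in its claimed generality --- but neither does the paper's, and the honest conclusion is that the proposition needs the hypothesis $y\ge 0$ (or $\sum_i\lambda_i=1$) to be added.
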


\begin{pf}
Let $\pi$ denote a permutation such that $y_{\pi(1)} \ge \ldots \ge y_{\pi(p)}$. Note first that $\owa(y) = \sum_{i=1}^p w_i y_{\pi(i)} = \sum_{i=1}^p (\sum_{j=1}^i w_j) (y_{\pi(i)} - y_{\pi(i+1)})$ where $y_{\pi(p+1)} = 0$. By Constraint~\ref{EQ1} with subset of objectives $I = \{\pi(1),\ldots,\pi(i)\}$, we have $\sum_{j=1}^i w_j \geq \sum_{j=1}^i \lambda_{\pi(j)}$ $\forall i$. Thus $\sum_{i=1}^p (\sum_{j=1}^i w_j) (y_{\pi(i)} - y_{\pi(i+1)}) \geq \sum_{i=1}^p (\sum_{j=1}^i \lambda_{\pi(j)}) (y_{\pi(i)} - y_{\pi(i+1)})$. Since $\sum_{i=1}^p (\sum_{j=1}^i \lambda_{\pi(j)}) (y_{\pi(i)} - y_{\pi(i+1)}) = \sum_{i=1}^p \lambda_{\pi(i)} y_{\pi(i)}$ and $\sum_{i=1}^p \lambda_{\pi(i)} y_{\pi(i)} = \sum_{i=1}^p \lambda_i y_i$, we have $\owa(y) \geq \sum_{i=1}^p \lambda_i y_i$.
\end{pf}

This result can be seen as a specific instanciation of a more general result of Schmeidler \cite{Schme86,Schme89} on Choquet integrals  in decision under uncertainty (OWA being a particular subclass of Choquet integrals). Schmeidler's result has been used to provide bounds in Choquet-based optimization under uncertainty \cite{GalaP07}, as well as under multiple objectives \cite{GalPS09}. Note that, in the case of decreasing weights in $\owa$, a useful by-product of Schmeidler's result is the existence of a normalized set of weights (i.e., summing up to 1) such that Proposition~\ref{PROP2} holds: for instance, when setting $\lambda_i = 1/p$, we have $\sum_i (1/p)y_i \leq \owa(y)$ by Chebyshev's sum inequality. Nevertheless, when the weights are not decreasing, the existence of normalized weights satisfying Constraint~\ref{EQ1} is not guaranteed.

Solving problem $P_{\varphi'}$ can be efficiently done by valuing every edge $e$ by $\sum_{i=1}^p \lambda_i v^e_i$ and then performing a standard minimum spanning tree algorithm. Besides, in order to obtain the best lower bound as possible, the optimal set of weights $\lambda \in \Lambda$ (i.e., providing the best lower bound according to Proposition~\ref{PROP2}) can be obtained by solving the following program:
\vspace*{-0.2cm}
\begin{eqnarray*}
 \max_{\lambda \in \mathbb{R}^p} & z(\lambda) &  = \min_{y \in Y} \sum_{i=1}^p \lambda_i y_i, \\
 & \mbox{s.t. } & \sum_{i \in I} \lambda_i \leq \sum_{i=1}^{\vert I \vert} w_i \quad \forall I \subseteq \{1,\ldots,p\},\\
& & \lambda_i \geq 0 \quad \forall i = 1, \ldots, p. \label{const2}
\end{eqnarray*}

Given that $z$ is a concave piecewise linear function of $\lambda$ for a fixed $y$ (since it is the
lower envelope of a set of linear functions $\{\sum_{i=1}^p y_i
\lambda_i : y \in Y\}$), we solve this program by using the SolvOpt
library \cite{KappK00}, which is an implementation of Shor's
$r$-algorithm \cite{Shor85}. This algorithm is indeed especially convenient for
non-differentiable optimization, and the implemented SolvOpt library
enables to perform constrained optimization. This approach is closed
to the lower bounding procedure proposed by Punnen and Aneja \cite{PunnA95} for min-max combinatorial optimization. In broad outline, it can be viewed as a sequence of minimum spanning tree computations according to a varying weight vector $\lambda$, until convergence towards a weight vector maximizing $\min_{y \in Y} \sum_{i=1}^p \lambda_i y_i$.

\begin{example}
Let us come back to the instance of Example \ref{EXP1}. Running Shor's $r$-algorithm yields $\lambda = (0.28,0.5,0.22)$ and $y = (6,8,6)$ (the image of spanning tree $\{[1,2],[1,4],[2,3]\}$, which is actually OWA-optimal). The bound is therefore $0.28 \times 6 + 0.5 \times 8 + 0.22 \times 6 = 7$ (which is the value of an OWA-optimal tree). This is better than the bound obtained by the previous relaxation (value 6.5). The computational burden is however more important, since it requires to solve much more single objective problems.
\end{example}

\subsection{Branch and bound procedure}

The two bounds presented above can of course be inserted into a branch and bound procedure in order to determine an OWA-optimal spanning tree for arbitrary weights. We summarize below the main features of the branch and bound procedure we propose.\\ [1ex]
\emph{Branching scheme.} The branching scheme is very simple: at each node, an edge $e$ of $G$ is selected and two subproblems are created. In the first one, edge $e$ is mandatory while in the second one, edge $e$ is forbidden. The heuristic to select the next edge consists in searching for an edge $e$ such that $\sum_{i=1}^p v^e_i$ is minimal among the remaining ones.\\ [1ex]
\emph{Computing the lower bound.} The lower bound at each node of the branching tree is computed by relaxation of the image set (by generating and solving problem $P_{Y'}$) or by relaxation of the objective function (by solving a sequence of problems $P_{\varphi'}$).\\ [1ex] 
\emph{Updating the incumbent.} When defining and solving problem
$P_{Y'}$ or $P_{\varphi'}$, the algorithm checks whether a newly
computed spanning tree is better than the current best known spanning
tree according to $\owa$. We take indeed advantage of the property that feasible spanning trees are generated when computing the lower bound: either $p+1$ spanning trees are generated to obtain the values of $b_i$ ($i=0,\ldots,p$) in $P_{Y'}$
or a sequence of spanning trees is generated during the running of
Shor's $r$-algorithm to find the best possible bound according to
$P_{\varphi'}$.\\ [1ex] 
\emph{Initialization.} A branch and bound algorithm is notoriously more efficient when a good solution is known even before starting the search. The initial solution in our method is the best known solution (according to $\owa$) after the run of the shaving procedure described below.

\subsection{Shaving procedure}
The definition of lower bounds makes it possible to resort to a \emph{shaving} procedure in order to reduce the size of the instance before running the main algorithm. The term ``shaving'' was introduced by Martin and Shmoys \cite{MartinShmoys96} for the job-shop scheduling problem. Assuming at least a feasible solution is known, this procedure works as follows: for each edge $e$, we build a subproblem in which $e$ is made mandatory. If the computation of the lower bound proves that the subproblem cannot improve the current best known solution, then it means that $e$ can be made definitively forbidden (colored red). Conversely when $e$ has not been colored red by the previous procedure, we test similarly whether $e$ can be made definitively mandatory (colored blue). Note that, here again, when computing the lower bounds, one checks whether a newly detected spanning tree improves the current best known solution (according to $\owa$). Of course, the shaving procedure is all the more efficient as a good feasible solution is initially known. For this purpose, we generate $k$ feasible solutions by running a $k$-best ranking algorithm (i.e., returning the $k$ best solutions) for the minimum spanning tree problem on the instance valued by the arithmetic mean of the vectors. The choice of $k$ depends on the size of the instance.

\begin{example}
Let us come back again to the clique of Example~\ref{EXP1}. Assuming that
the $k$-best ranking algorithm is run for $k=2$, the current best known solution is then spanning tree $T_4 = \{[1,2],[1,4],[2,3]\}$. Let us now
simulate the progress of the shaving procedure. Making edge $[1,2]$
forbidden, the minimum spanning tree for weight vector $\lambda =
(0.3,0.5,0.2)$ is $\{[1,3],[1,4],[2,3]\}$ with image $y =
(7,9,4)$. The induced lower bound is 7.4, which is strictly greater
than the OWA-value of $T_4$ ($\owa(f(T_4))=7$). Therefore edge $[1,2]$
is colored blue. Then making edge $[1,3]$ mandatory, the minimum spanning
tree for weight vector $\lambda = (0.5,0.3,0.2)$ is
$\{[1,2],[1,3],[1,4]\}$ with image $y = (8,7,6)$. The induced lower
bound is 7.3, which is strictly greater than the OWA-value of
$T_4$. Therefore edge $[1,3]$ is colored red. 
Next, making edge $[1,4]$ forbidden, the minimum spanning tree for
weight vector $\lambda = (0.3,0.5,0.2)$ is $\{[1,2],[2,3],[3,4]\}$
with image $y = (6,11,5)$. The induced lower bound is 8.3, and edge
$[1,4]$ is therefore colored blue. Finally, making edge $[2,3]$
forbidden, the minimum spanning tree for weight vector $\lambda =
(0.2,0.5,0.3)$ is $\{[1,2],[1,4],[3,4]\}$ with image $y =
(5,9,6)$. The induced lower bound is 7.3, and consequently edge
$[2,3]$ is colored blue. All edges of $T_4$ are colored blue: one concludes
that $T_4$ is an $\owa$-optimal spanning tree without even starting the
main algorithm. 
\end{example}

\section{Experimental results} \label{STests}

Before we study more carefully the behavior of our algorithms, we give some insights into previous results on related topics. The most widely studied related topic is the generation of the Pareto set in the bi-objective spanning tree problem \cite{AndJL96,HamaR94,RaASG98,SourS08,SteiR08}. To our knowledge, the most efficient algorithm for this problem enables to solve randomly drawn clique instances containing up to 400 vertices \cite{SourS08} (note that these results significantly improved the size of the instances that could be handled, since it grew from 40 to 400 vertices). However these results do not extend to more than two objectives. More generally, even when looking for a single compromise solution within the Pareto set, it seems that there is very few available numerical experiments in the literature for more than two objectives. Although several works deal with the min-max spanning tree problem (i.e. determining a spanning tree minimizing the max criterion) \cite{AisBV07,AisBV09,HamaR94,Warbu85,Yu98}, the content of these works is indeed mainly theoretical. Actually, the only numerical results we know for more than two objectives are devoted to the determination of a Choquet-optimal spanning tree \cite{GalPS09}. The size of the tackled instances goes from 30 to 70 vertices according to the number of objectives and the parameterization of the Choquet integral.

\subsection{Experimental details} \label{SSImpl}

All the algorithms have been implemented in C++ and were run on a 2.6
GHz personal computer with a memory of 3.4GB. The test instances are defined as follows. All considered graphs are cliques. The components of cost vectors are randomly drawn between 1 and 100 on each edge. The number of objectives varies from 3
to 10, and the number of vertices from 10 to 100 for strictly decreasing weights, and from 10 to 25 for non-monotonic weights. For each kind of
instances (depending on the number of nodes and on the number of
objectives), 30 instances were randomly drawn to obtain average results. 

\subsection{Tests with strictly decreasing weights}

The global procedure to solve problem $P_{\mbox{\sc owa}}$ when the
weights are strictly decreasing consists of two phases: 
\begin{enumerate}
\item \emph{Coloration phase}: making the most possible edges blue (mandatory) or
  red (forbidden) by running first the preprocessing procedure, and then a shaving procedure taking into account the coloration obtained by preprocessing. 
\item \emph{Resolution phase}: determining the $\owa$-optimal spanning
  tree by running the main resolution algorithm (solution by MIP or
  branch and bound) on the reduced instance.
\end{enumerate}

In this section, one summarizes the results one has obtained by
running this global procedure. For initializing the shaving procedure in the coloration phase, the $k$-best ranking algorithm proposed by Katoh et al. \cite{KatIM81} is launched for $k$ varying from 500 to 5000 depending on the size of the instance. In all cases, the order of magnitude of its running time is about a few milliseconds. Concerning the shaving itself, preliminary
results have shown that it is much more efficient
when the bound is defined by relaxation of the objective function than by the
relaxation of the image set, especially when the
size of the instances grows. For this reason, one only summarizes here
the execution times obtained when the shaving procedure is performed
by relaxation of the objective function. This shaving procedure is
denoted by \emph{sh$_{\varphi'}$} in the sequel. Finally, one compares the various
resolution algorithms proposed in this paper: the solution by MIP and the one by
branch and bound (two versions according to the bound adopted). The mixed integer program is solved by using solver ILOG CPLEX 11. Regarding the branch and bound procedure, the bounds proposed in Section 4 are compared: the one obtained by relaxation of the objective function, and the one obtained by relaxation of the image set. 

Tables \ref{tabPL1}, \ref{tabPL2} and \ref{tabPL3} summarize the results
obtained by running the algorithms on cliques with respectively 3, 5
or 10 objectives (the weights of the $\owa$ operator are indicated in the caption of the tables), for various numbers of vertices for which the average resolution time is lower than 30 minutes. The upper part of the tables summarizes the informations
about the coloration phase (line \emph{pp} for the preprocessing
procedure and line \emph{sh$_{\varphi'}$} for the shaving procedure). For each procedure and each size of instance, the average execution time is indicated. Furthermore, below line \emph{pp} (resp. \emph{sh$_{\varphi'}$}), the average number of edges made blue ($\#b$) or red ($\#r$) after preprocessing (resp. after preprocessing \emph{and} shaving) is indicated (couple $(\#b-\#r)$). In the lower part of the tables are indicated the average total resolution times for various combinations of procedures for the coloration phase and the resolution phase. To evaluate the variability of the resolution time, the minimal running time ($\min$) as well as the maximal one ($\max$) are also indicated (couple $\min-\max$ under the average execution time). To encode the combinations, the following abbreviations are used: MIP stands of course for Mixed Integer Programming, $BB_{Y'}$  for the branch and bound obtained by relaxation of the image set, and $BB_{\varphi'}$ for the branch and bound obtained by relaxation of the objective function. Note that, in order to show the impact of the coloration phase, the first line of the lower part indicates the resolution times when solving the MIP formulation without resorting to any preprocessing or shaving. 

\begin{table}[!h]
  \begin{center}
{\small
  \begin{tabular}{|c|ccccc|}    \hline
      \rule[1pt]{0pt}{13pt} 
   $n$ & 20&30&40&50&60\\
    \hline
    pp  &0&0.02&0.05&0.11&0.21\\
    	 &(2.8 - 128.3)&(4.4 - 332)&(5.8 - 628.3)&(6.3 - 1025.4)&(7.1 - 1524.9)\\
    sh$_{\varphi'}$&0.51&2.89&10.76&31.43&79.19\\
         &(12 - 160.7)&(20.1 - 390)&(26.1 - 720.5)&(33.5 - 1154.7)&(37.9 - 1680.4)\\
   \hline
 MIP&0.51&2.8&8.11&56.11&1018.5\\
      &0.02 - 4.65&0.2 - 20.3&0.45 - 49.7&0.88 - 639.9&1.58
      - 16842\\  
pp+sh$_{\varphi'}$+MIP&0.55&3.04&11.73&32.26&89.38\\
   &0.31 - 0.99&2.26 - 4.93&7.88 - 32.55&25.7 - 43.4&68.8
   - 177.3\\ 
pp+sh$_{\varphi'}$+BB$_{Y'}$&0.58&3.11&25.89&34.51&241.52\\
     &0.31 - 1.49&2.26 - 5.79&7.87 - 345.4&25.7 - 67.7&68.8 - 1945.5\\
pp+sh$_{\varphi'}$+BB$_{\varphi'}$&1.35&10.27&72.1&206&1049.7\\
    &0.32 - 3.79&2.71 - 37.2&8 - 571.2&28.5 - 1183.3&75.8 - 4457.9\\
  \hline
    \end{tabular}\\ [2ex]
\begin{tabular}{|c|cccc|}   \hline
 \rule[1pt]{0pt}{13pt} 
  $n$ & 70&80&90&100\\
  \hline
  pp & 0.34&0.54&0.79&1.14\\
    &(8.5 - 2125.2)&(9.3 - 2827.8)&(12 - 3624.1)&(13.2 - 4520.8)\\
  sh$_{\varphi'}$&165.6&314.7&563.7&987.7\\
  &(50.1 - 2318)&(52.8 - 3044.7)&(63.1 - 3879.1)&(71.6 - 4811.4)\\
  \hline
 pp+sh$_{\varphi'}$+MIP&266.94&342.79&590.98&1167.83\\
 &138.3 - 3206.5&267.1 - 720.6&475.2 - 1025.3&810.6 - 4453\\
\hline
\end{tabular}
}
\end{center}
\caption{\label{tabPL1} Synthesis of the numerical results for 3 objectives ($w_1 = 0.6$, $w_2 = 0.3$ and $w_3 = 0.1$). Execution times are indicated in seconds.
}
\end{table}

Table 1 shows that the best results are obtained by using the MIP formulation with a coloration phase. This is the only algorithm able to handle instances with more than 60 vertices with an average computation time below 30 minutes. It is essential to note that the coloration phase has a very significative impact on the computation time. For $n= 60$, one sees indeed that the resolution time for MIP is above 1000 seconds, while it is below 100 seconds when the coloration phase is used. This can be easily understood by observing that the number of colored edges at the end of the coloration phase is a low fraction of the initial number of edges. For instance, in average, it remains only 67 uncolored edges over 4950 for $n=100$. The main interest of preprocessing is that it enables to color a lot of edges in a very low computation time. Hence, the number of edges that are tested during the shaving procedure is reduced, which is computationally interesting since the shaving is more powerful but takes also much more time. Performing shaving after preprocessing makes it possible to color blue a large part of the optimal spanning tree in a reasonable computation time.  For illustration, in average,  preprocessing colors red 4520.8 edges for $n=100$, and shaving colors blue 71.6 edges (including those colored blue by preprocessing) over a maximum number of 99. A large amount of the total resolution time is spent in the coloration phase: for $n=100$, 988.84 seconds are spent in average in the coloration phase (pp+sh$_{\varphi'}$) over a total resolution time of 1167.83 seconds. Finally, note that there is an important variability in the resolution time according to the instance (for $n=60$, the MIP algorithm takes between 1.58 seconds and more than 4 hours and half). The coloration phase tends however to reduce this variability.

\begin{table}[!h]
  \begin{center}
 {\small \begin{tabular}{|c|cccccc|}
   \hline
      \rule[1pt]{0pt}{13pt} 
   $n$ & 10&20&30&40&50&60 \\
    \hline
 pp  &0&0.01&0.02&0.065&0.14&0.27\\
    	  &(0.16 - 8.36)&(0.18 - 63.42)&(0.27 - 187.2)&(0.16 - 384.9)&(0.27 - 668)&(0.3 - 1034.9)\\
 sh$_{\varphi'}$& 0.13&1.71&10.5&43.57&132.21&337.13\\
    	  &(3.54 - 26.84)&(5.42 - 142.5)&(8.13 - 359.3)&(9.92 - 667.1)&(12.77 - 1079.7)&(15.52 - 1596.3)\\
    \hline
 MIP&0.07&1.74&52.39&133.8&1426.7&-\\
    &0.03 - 0.14&0.12 - 12.06&0.59 - 1166.2&0.94 - 3591&12.26 - 16610& - \\
  pp+sh$_{\varphi'}$+MIP &0.16&2.63&27.27&87.66&433.8&1202.7\\
      & 0.09 - 0.24&1.47 - 12.83&8.66 - 284.2&38.88 - 693&110.5 - 2589.6&301.8 - 10153.4\\
  pp+sh$_{\varphi'}$+BB$_{Y'}$& 0.17&14.23&764&-&-&-\\
 & 0.09 - 0.62&1.5 - 218.8&8.75 - 13767.5& - & - & - \\
pp+sh$_{\varphi'}$+BB$_{\varphi'}$& 0.37&19.63&541.3&-&-&-\\
    &0.09 - 2.02&2.58 - 91.58&13.52 - 5018.5& - & - & - \\
  \hline
       \end{tabular}
}
\end{center}
\caption{\label{tabPL2} Synthesis of the numerical results for 5 objectives ($w_1 = 0.5$, $w_2 = 0.3$,
  $w_3 = 0.1$, $w_4 = 0.06$ and $w_5 = 0.04$). Execution times are indicated in seconds. Symbol ``-'' means that the average execution time exceeds 30 minutes.}
\end{table}

\begin{table}[!h]
  \begin{center}
 {\small  \begin{tabular}{|c|ccccc|}
    \hline
      \rule[1pt]{0pt}{13pt} 
   $n$ & 10&15&20&30&35\\
    \hline
 pp  & 0&0&0.01&0.03&0.05\\
    	 &(0 - 2.22)&(0 - 8.73)&(0.03 - 20.2)&(0.03 - 69.5)&(0.03 - 108.5)\\
 sh$_{\varphi'}$& 2.69&5.21&11.41&44&86.57\\
    	 &(3.78 - 25.58)&(2.43 - 58.33)&(1.83 - 107.07)&(1.1 - 256.7)&(1.3 - 365.27)\\
    \hline
 MIP&0.15&1.87&11.47&519.2&-\\
    & 0.05 - 0.55&0.13 - 7.65&0.6 - 33.65&4.76 - 6623& - \\
  pp+sh$_{\varphi'}$+MIP& 2.77&6.54&19.29&311&1284.8\\
  &  2.32 - 3.58&4.42 - 12.52&10.2 - 40.6&41.26 - 2937.3&188.1 - 8504.6\\
  pp+sh$_{\varphi'}$+BB$_{Y'}$ & 3.15&91.74&-&-&-\\
  & 2.32 - 8.47&4.41 - 532.71& - & - & - \\
pp+sh$_{\varphi'}$+BB$_{\varphi'}$& 7.93&58.27&477.4&-&-\\
    & 2.32 - 25.75&5.21 - 317&13.32 - 2109.5& - & - \\
  \hline
    \end{tabular}
}
\end{center}
\caption{\label{tabPL3} Synthesis of the numerical results for 10 objectives ($w_1 = 0.25$, $w_2 = 0.2$,
  $w_3 = 0.15$, $w_4 = 0.1$, $w_5 = 0.09$, $w_6 = 0.08$, $w_7 = 0.06$,
  $w_8 = 0.04$, $w_9 = 0.02$ and $w_{10} = 0.01$). Execution times are indicated in seconds. Symbol ``-'' means that the average execution time exceeds 30 minutes.}
\end{table}

Tables \ref{tabPL2} and \ref{tabPL3} show that the number of objectives has a great impact on the performances of the algorithms. For instance, for $n = 60$, the resolution time of algorithm pp+sh$_{\varphi'}$+MIP is 1202.7 seconds in average for 5 objectives, compared to 89.38 seconds for 3 objectives. Besides, for 10 objectives, one can only handle instances with up to 35 vertices within the time limit of 30 minutes.
The resolution time is also sensitive to the weights of the $\owa$ operator. In order to evaluate this sensitivity, we also performed a test with weights inducing an $\owa$
operator closer to the arithmetic mean. Namely, we set $w_1 = 0.4$, $w_2 = 0.35$ and $w_3 = 0.25$.

\begin{table}[!h]
  \begin{center}
  {\small \begin{tabular}{|c|ccccc|}    \hline
      \rule[1pt]{0pt}{13pt} 
   $n$ & 40&50&60&70&80\\
    \hline
    pp  &0.06&0.12&0.22&0.38&0.59\\
    	 &(23.57 - 717.6)&(30.47 - 1148.5)&(35.9 - 1675.2)&(41.85 - 2305.7)&(48 - 3033.6)\\
    sh$_{\varphi'}$&2.75&7.22&18.65&37.51&74.3\\
         &(32.5 - 733.9)&(41.9 - 1167.3)&(49.3 - 1700.2)&(57.1 - 2334.9)&(64.85 - 3064.8)\\
   \hline
 MIP&7&65.24&243.8&-&-\\
      &0.25 - 94.75&0.57 - 830.3 &7.12 - 3227& - & - \\
pp+sh$_{\varphi'}$+MIP&2.81&7.35&18.9&37.96&75.21\\
  &2.04 - 3.96&4.22 - 11.37&14.23 - 22.75&26.26 - 48.95&55 - 95.15\\
pp+sh$_{\varphi'}$+BB$_{Y'}$&2.82&7.38&18.97&38.376&77.85\\
     &2.04 - 4.01&4.22 - 11.86&14.23 - 23.04&26.26 - 55.59&54.99 - 104.1\\
pp+sh$_{\varphi'}$+BB$_{\varphi'}$&5.37&13.28&51.5&114.9&391.1\\
    &2.1 - 17.03&4.33 - 48.46&15.57 - 184.5&26.57 - 484.1&64.08 - 1853.1\\
  \hline
    \end{tabular}\\ [2ex]
}
\end{center}
\caption{\label{tabPL26} Synthesis of the numerical results for 3 objectives ($w_1 = 0.4$, $w_2 = 0.35$ and $w_3 = 0.25$). Execution times are indicated in seconds. 
}
\end{table}

For this set of weights, all the tested algorithms perform better, as it can be seen in
Table~\ref{tabPL26}. In particular, the preprocessing procedure makes it possible to color
blue more edges than for the previous set of weights. For example, for $n = 60$, 49.3 edges are colored blue
in average (over a maximal number of blue edges of 59)
while only 37.9 are colored blue in average when the
previous set of weights is used. Moreover, one can also observed that,
once the coloration phase is performed, branch and bound procedure
BB$_{Y'}$ is as efficient as the
resolution by MIP.

\subsection{Tests with non-monotonic weights}

We present here the results obtained when the weights are
non-monotonic. In this subsection, we used Hurwicz's criterion as
$\owa$ operator. We recall that, for a vector $y$, it is defined as
$\alpha \max_i y_i + (1-\alpha) \min_i y_i$. The instances are defined
similarly to the previous subsection, with 3, 4 or 5 objectives, and
parameter $\alpha$ varying from 0.4 to 0.6 in Hurwicz's criterion. As
mentioned in Section 3, the MIP formulation is no more valid in this
case, as well as the preprocessing procedure (the optimality
conditions do not hold anymore). Furthermore, the bound obtained by
relaxation of the objective function becomes weak since there does not
necessarily exist a normalized set of weights satisfying
Constraint~\ref{EQ1} (see Section~\ref{PPHI'}). Consequently, in the
global procedure described in Subsection 5.2, the coloration phase consists of a
single run of the shaving procedure with the bound obtained by
relaxation of the image set, and the resolution phase consists of
applying branch and bound BB$_{Y'}$. Tables~\ref{tabPL4}, \ref{tabPL5}
and \ref{tabPL6} summarize the results obtained. The conventions are the same than in the previous subsection.

\begin{table}[!h]
  \begin{center}
 {\small 
 \begin{tabular}{|c|c|ccccc|}
    \hline
      \rule[1pt]{0pt}{13pt} 
       &$n$ & 5&10&15&20&25\\
       \hline
  \hline
       & sh$_{Y'}$&0&0&0.02&0.04&-\\
       $\alpha = 0.4$   & &(1.5 - 3.23)&(0.13 - 12.53)&(0 - 4)&(0 - 2.73)& - \\
       \cline{2-7} 
       &sh$_{Y'}$+BB$_{Y'}$&0&0.05&1.01&37.93&-\\
       &  & 0 - 0.02&0 - 0.77&0.05 - 6.18&0.85 - 129.7& - \\
  \hline
  \hline
   & sh$_{Y'}$&0&0&0.02&0.04&0.07\\
$\alpha = 0.5$  & &(1.72 - 3.22)&(0.07 - 10.83)&(0.03 - 18.5)&(0 - 15.27)&(0 - 12.57)\\

    \cline{2-7} 
 &sh$_{Y'}$+BB$_{Y'}$&0&0.03&0.87&7.53&195.6\\
  &  &0 - 0.03&0 - 0.14&0.06 - 9.93&0.23 - 40.46&2.14 - 2203.1\\
  \hline
  \hline
  & sh$_{Y'}$&0&0.01&0.01&0.03&0.06\\
  $\alpha = 0.6$  & &(1.73 - 3.3)&(1.9 - 23.2)&(0.8 - 48.07)&(0.23 - 82.1)&(0 - 100.5)\\
  \cline{2-7} 
  &sh$_{Y'}$+BB$_{Y'}$&0&0.01&0.11&0.66&5.09\\
  &  &0 - 0.01&0 - 0.04&0.01 - 0.66&0.07 - 3.02&0.19 - 23.68\\
  \hline
    \end{tabular}
}
\end{center}
\caption{\label{tabPL4} Synthesis of the numerical results for 3 objectives. Execution times are indicated in seconds. Symbol ``-'' means that the average execution time exceeds 30 minutes.}
\end{table}

\begin{table}[!h]
  \begin{center}
  {\small \begin{tabular}{|c|c|ccccc|}
    \hline
       \rule[1pt]{0pt}{13pt} 
       &$n$ & 5&10&15&20&25\\
       \hline
     \hline
    & sh$_{Y'}$&0&0.01&0.02&-&-\\
       $\alpha = 0.4$   & &(0.9 - 2.6)&(0 - 2.67)&(0 - 4.33)& - & - \\

       \cline{2-7} 
       &sh$_{Y'}$+BB$_{Y'}$&0&0.21&10.31&-&-\\
       &  &0 - 0.02&0.02 - 0.57&0.21 - 82.37& - & - \\
  \hline
  \hline
   & sh$_{Y'}$&0&0.01&0.02&0.05&-\\
$\alpha = 0.5$  & &(0.63 - 2.03)&(0 - 4.53)&(0 - 4.9)&(0 - 0.47)& - \\

    \cline{2-7} 
 &sh$_{Y'}$+BB$_{Y'}$&0&0.12&6.03&669.09&-\\
  &  &0 - 0.02&0.01 - 0.72&0.22 - 63.63&10.84 - 8779.1& - \\
  \hline
  \hline
  & sh$_{Y'}$&0&0.01&0.02&0.05&0.09\\
  $\alpha = 0.6$  & &(1.17 - 2.33)&(0.07 - 7.8)&(0 - 8.07)&(0 - 3.1)&(0 - 2.7)\\
  \cline{2-7} 
  &sh$_{Y'}$+BB$_{Y'}$&0&0.11&1.66&84.87&886.8\\
  &  &0 - 0.03&0.01 - 0.32&0.16 - 4.65&6.15 - 760.5&12.46 - 5198.1\\
  \hline
    \end{tabular}
}
\end{center}
\caption{\label{tabPL5} Synthesis of the numerical results for 4 objectives. Execution times are
  indicated in seconds. Symbol ``-'' means that the average execution time exceeds 30 minutes.}
\end{table}

\begin{table}[!h]
  \begin{center}
  {\small \begin{tabular}{|c|c|cccc|}
    \hline
      \rule[1pt]{0pt}{13pt} 
       &$n$ & 5&10&15&20\\
       \hline
     \hline
    & sh$_{Y'}$&0&0.01&0.02&-\\
       $\alpha = 0.4$   & &(0.83 - 2.4)&(0.03 - 2.27)&(0 - 1.17)& -\\
       \cline{2-6} 
       &sh$_{Y'}$+BB$_{Y'}$&0&0.45&40.74&-\\
       &  &0 - 0.02&0.01 - 2.98&1.4 - 184.8& -\\
  \hline
  \hline
   & sh$_{Y'}$&0&0.01&0.02&0.06\\
$\alpha = 0.5$  & &(0.87 - 1.93)&(0 - 1.6)&(0 - 0.1)&(0 - 0)\\

    \cline{2-6} 
 &sh$_{Y'}$+BB$_{Y'}$&0.01&0.72&47.93&1674.9\\
  &  &0 - 0.02&0.05 - 4.45&1.27 - 528.6&139 - 8378.1\\
  \hline
  \hline
  & sh$_{Y'}$&0&0.01&0.04&0.05\\
  $\alpha = 0.6$  & &(0.47 - 1.2)&(0 - 1.47)&(0 - 1.4)&(0 - 0)\\
  \cline{2-6} 
  &sh$_{Y'}$+BB$_{Y'}$&0.01&0.87&40.12&760.9\\
  &  &0 - 0.03&0.1 - 3.96&0.2 - 267& 29.61 - 3611.1\\
  \hline
    \end{tabular}
}

\end{center}
\caption{\label{tabPL6} Synthesis of the numerical results for 5 objectives. Execution times are
  indicated in seconds. Symbol ``-'' means that the average execution time exceeds 30 minutes.}
\end{table}

In these tables, one can observe that parameter $\alpha$ has a significant impact on the resolution times: it is easier for values of $\alpha$ that are above 0.5. For
example, with 3 objectives and $n = 25$, the
average execution time is 5.09 seconds for $\alpha = 0.6$, while it
is 195.6 seconds for $\alpha = 0.5$, and it is more than 30 minutes
for $\alpha = 0.4$. The sizes of the tackled instances are more modest than for strictly decreasing weights. The only available alternative method should be however to resort to an exhaustive enumeration procedure \cite{KapoR95}, that would become
intractable for 12 vertices. The results presented here are therefore
a first step towards optimizing non-convex aggregation functions in
multiobjective spanning tree problems. They make it possible to handle
instances the size of which grows up to 25 vertices.

\section{Conclusion}

In this paper we have proposed several methods to solve the
$\owa$-optimal spanning tree problem. One is based on a MIP
formulation and is valid only when the weights of the $\owa$ operator
are strictly decreasing. We have shown that the use of a preprocessing
phase, as well as a shaving procedure, makes it possible to
considerably reduce the size of the problem, and therefore speed up
the resolution. The numerical results prove the efficiency of the global resolution procedure (one is able to solve multiobjective instances with up to 100 vertices in reasonable times). Two branch and bound algorithms have also been
presented, that work whatever weights are used. Despite a greater resolution time in the strictly decreasing case, they make it possible to tackle the non-monotonic case. This is a very challenging task, and the numerical results presented here are a first step in this direction. For future works, it would be worth further investigating optimization of OWA in this case, as well as some interesting variations of OWA, namely the non-monotonic OWA operator \cite{Yager99} (where negative
weights are allowed) and the weighted OWA operator \cite{Torra97}
(where importance weights specific to each objective are allowed in
addition to the weights of the OWA operator). Other promising research
direction would be to propose a MIP formulation for a broader subclass
of Choquet integrals (enabling to take into account positive and
negative interactions between objectives), as for instance the class of
$k$-additive Choquet integrals \cite{Grabi97}. These research tracks
are especially important because the search for a single best
compromise solution is nearly the only operational approach when there
are more than three objectives and the problem does not fit into
the dynamic programming framework. 


%
%


\bibliographystyle{plain}      
\bibliography{lgos4or09}   

%
%

\end{document}